\documentclass[
reprint,
superscriptaddress,
amsmath,
amssymb,
aps,
prl
]{revtex4-2}
\usepackage{graphicx}
\usepackage{dcolumn}
\usepackage{bm}
\usepackage{hyperref}
\usepackage{amsmath,amssymb,amsfonts,amsthm}
\usepackage{graphicx}
\usepackage{hyperref}
\usepackage{bm}
\usepackage{mathtools}
\usepackage{cancel}
\usepackage{xcolor}
\usepackage{xcolor}
\usepackage[mathscr]{euscript}
\usepackage{mathrsfs}
\newtheorem{theorem}{Theorem}

\newtheorem{definition}{Definition}

\newcommand{\tr}{\operatorname{tr}}
\newcommand{\Ric}{\operatorname{Ric}}
\newcommand{\Scal}{\operatorname{Scal}}

\newcommand{\Rad}{\operatorname{Rad}}

\newcommand{\ubar}{\underline{u}}
\newcommand{\Lbar}{\underline{L}}
\newcommand{\Hbar}{\underline{H}}
\newcommand{\chibar}{\underline\chi}
\newcommand{\gslash}{\cancel{g}}

\begin{document}

\title{Boundary-Geometry-Driven Black Hole Formation in Vacuum}

\author{Dipanjan Dey}
\affiliation{Beijing Institute of Mathematical Sciences and Applications,
Yau Mathematical Sciences Center, Tsinghua University, Beijing 100084, China}
\author{Puskar Mondal}
\affiliation{Beijing Institute of Mathematical Sciences and Applications,
Yau Mathematical Sciences Center, Tsinghua University, Beijing 100084, China}
\author{Shing-Tung Yau}
\affiliation{Beijing Institute of Mathematical Sciences and Applications,
Yau Mathematical Sciences Center, Tsinghua University, Beijing 100084, China}

\date{\today}

\begin{abstract}
We identify a boundary-geometry-driven mechanism for the dynamical formation of marginally outer trapped surfaces (MOTSs) in vacuum general relativity. Mild three-dimensional anisotropies evolve inside a compact Cauchy domain whose effective isotropic thickness remains controlled, while the generalized boundary mean curvature can increase during either contracting or expanding boundary evolution. This drives the boundary across Yau's geometric threshold, forcing MOTS formation from initially untrapped data. We further interpret the characteristic shear construction of Ref.~\cite{MondalYau2026} as the null manifestation of the same anisotropic vacuum dynamics. The result provides a purely vacuum physical realization of MOTS formation through global geometric effects, without invoking a short-pulse concentration mechanism for gravitational radiation.
\end{abstract}

\maketitle


\noindent One of the central problems in the mathematical theory of black holes is to understand not only their existence, but their dynamical formation from regular initial data. Penrose’s incompleteness theorem asserts that a spacetime satisfying the null energy condition and admitting a noncompact Cauchy hypersurface is future null geodesically incomplete if it contains a closed spacelike two-surface whose two future-directed null expansions, $(\tr\chi)$ and $(\tr\chibar)$, are both negative. Such a surface is known as a \emph{closed trapped surface} \cite{Penrose1965,HawkingEllis}. In this sense, the theorem is conditional: it assumes the existence of the trapped surface that signals gravitational collapse, but does not explain how such a surface forms from regular initial data. The basic dynamical problem is therefore to identify sufficiently general untrapped initial configurations whose future evolution necessarily develops closed trapped surfaces.

\noindent Schoen and Yau \cite{SchoenYau1983} first established an initial-data criterion for the existence of a marginally outer trapped surface (MOTS), later extended by Yau to a broader setting allowing even negative energy density \cite{Yau2002}. Their global criterion couples the bulk and boundary geometry through the $H$-radius, which measures the isotropic thickness of a compact domain, and guarantees a closed surface satisfying
$
\tr\chi=0,~~\tr\underline{\chi}<0.
$

Christodoulou obtained the first nonsymmetric dynamical formation of trapped surfaces through the \textit{short-pulse method}, whose hierarchy of large and small initial-data components is propagated by the Einstein evolution up to controlled errors \cite{Christodoulou2009}. This framework was later refined through alternative hierarchies and matter couplings; see, e.g., \cite{Kl-Rod,LKR,chen,A19,AL17,A-M-Y,athanasiou2025semi}. In Christodoulou's setting, the incoming null data are trivial and the interior is Minkowskian. By contrast, Ref.~\cite{MondalYau2026} constructs an open set of smooth, asymptotically flat vacuum data with nontrivial interior geometry and no initial trapped surfaces or MOTSs, for which Yau's boundary condition arises dynamically. Here we give the first explicit realization of Yau’s boundary-geometry mechanism for the dynamical formation of marginally outer trapped surfaces (MOTSs). We show that, within a compact Cauchy domain, mild three-dimensional anisotropies may evolve while the effective isotropic thickness remains controlled, yet the generalized boundary mean curvature can increase under either contracting or expanding boundary evolution; once it crosses Yau’s geometric threshold, a MOTS is necessarily forced to form from initially un-trapped data.

  
\begin{definition}
    Let $\Gamma$ be a simple closed curve in $M$ that bounds a disk in $M$. Consider $N_r(\Gamma)$ as the set of points within a distance $r$ of $\Gamma$. The H-radius of $M$ with respect to $\Gamma$ is defined as:
\begin{eqnarray}
    \text{Rad}(M, \varGamma): = \sup \{ r : \text{dist}(\varGamma, \partial M) > r, \ \varGamma \text{ does not bound}\nonumber\\
    \text{ a disk in } N_r(\varGamma) \}.\nonumber
\end{eqnarray}
Now, the H-radius of $M$, denoted by $\text{Rad}(M)$, defined as $\text{Rad}(M): = \sup \left\{ \text{Rad}(M, \varGamma) : \varGamma \text{ as above} \right\}$.
\end{definition}

For example, the H-radius of a ball of radius $R$ in $\mathbb{R}^3$ is: $\text{Rad}(M)=\frac R2$. The H-radius of a solid cylinder ($D^2\times (+L,~-L)$) of radius $R$ and length $2L$ in $\mathbb{R}^3$ is: $\text{Rad}(M)=\min\left(\frac R2,L\right)$.  


\noindent In Yau's \cite{Yau2002} theorem, the key geometric entities are the Schoen-Yau $H$-radius of the compact domain and the generalized mean curvature ($c$) of the boundary ($\partial M$).
To state the theorem geometrically, let \(\mathcal{M}_t\) be a spacelike Cauchy hypersurface in \((\mathcal{M},\mathrm{g})\), and let \(M_t\subset\mathcal{M}_t\) be a compact domain with smooth boundary \(S_t=\partial M_t\). The induced initial data \((g,k)\) satisfy the Einstein constraint equations.
Let \(e_T\) be the future-directed unit normal to \(\mathcal{M}_t\), \(e_S\) the outward unit normal to \(S_t\) within \(\mathcal{M}_t\), and \({e_A},{A=1,2}\) an orthonormal frame tangent to \(S_t\). Define
$
s_{AB}
=
\left\langle
\nabla_{e_A}e_S,e_B
\right\rangle,
\
k_{AB}
=
\left\langle
\nabla_{e_A}e_T,e_B
\right\rangle,
$
where \(\nabla\) is the Levi-Civita connection of \(\mathrm{g}\). If \(\Sigma\) is the induced metric on \(S_t\), then
$
H
=
\Sigma^{AB}s_{AB},~
\kappa
=
\Sigma^{AB}k_{AB}.
$
Thus, \(H\) is the mean curvature of \(S_t\subset\mathcal{M}_t\), while \(\kappa=\operatorname{tr}_{S_t}k\). In vacuum, Yau's theorem implies that if
\begin{equation}
\operatorname{Rad}(M_t)
\geq
\frac{3\pi}{2c},
\qquad
c:=
\min_{S_t}
\left(
H-|\kappa|
\right)>0,
\end{equation}
then $(M_t)$ contains a marginally outer trapped surface.

\noindent In Ref.~\cite{MondalYau2026}, the latter two authors show that Yau's boundary condition can arise dynamically from an open set of smooth, asymptotically flat vacuum initial data containing neither trapped surfaces nor MOTSs. On the initial slice \(\mathcal{M}_{-a}\), with \(a\gg 1\), they decompose
$
\mathcal{M}_{-a}
=
M_1\cup M_2\cup M_{\mathrm{ext}},
\qquad
\partial M_1=S_{-a,0},
$
where \(M_1\) is a compact interior domain, \(M_2\) is generated by the semiglobal characteristic development, and \(M_{\mathrm{ext}}\) is the far exterior; see Fig.~\ref{evol1}. The interior is further decomposed as
$
M_1
=
\widetilde{M}_1
\cup
\left(
M_1\setminus\widetilde{M}_1
\right),
$
where \(\widetilde{M}_1\) is a compact core and \(M_1\setminus\widetilde{M}_1\) is a collar in which the characteristic and Cauchy data are smoothly matched.
The evolution remains regular up to a finite proper time \(t_*>0\), at which a compact domain \(M_{t_*}\subset\mathcal{M}_{t_*}\) contains a MOTS. The proof controls \(H-|\kappa|\) along the evolving boundary of the causal future of \(M_1\), while forcing the \(H\)-radius of
$J^+(M_1)\cap\mathcal{M}_t$
to increase. At \(t=t_*\), these estimates establish Yau's boundary condition on
$\partial J^+(M_1)\cap\mathcal{M}_{t_*},$
and hence imply the existence of a MOTS in \(M_{t_*}\). Here \(M_t=\Phi_t(M_1)\), where \(\Phi_t\) is the flow generated by the global time vector field \(\partial_t\).
Ref.~\cite{MondalYau2026}, however, does not isolate the internal mechanism driving the formation of MOTS. In particular, it does not quantify the evolution of the geometric anisotropy or its effect on the \(H\)-radius and boundary geometry of \(M_t\). In this article, our purpose is not to claim that MOTS formation itself forces the
anisotropy conditions derived below, but to identify intrinsic sufficient
conditions within this vacuum development under which the evolving
boundary reaches Yau's threshold.

\noindent We study the proper-time evolution of a compact spacelike domain and identify an intrinsic geometric mechanism for MOTS formation. The key observation is that control of the trace-free part of the three-dimensional Ricci tensor, \(\Ric^{(3)}\), prevents substantial collapse of the \(H\)-radius, while the generalized mean curvature of the boundary increases along the evolution. Their combined effect forces the evolving boundary across the Yau threshold and guarantees the appearance of a MOTS in the interior. This yields a description of horizon formation formulated entirely in terms of the geometry of the evolving compact domain and reveals how the characteristic radiation mechanism of Ref.~\cite{MondalYau2026} is encoded in the evolving geometry of the compact domain.
This viewpoint is distinct from that of Ref.~\cite{MondalYau2026}, where MOTS formation is obtained through a characteristic evolution. There, semi-global well-posedness up to a time \(t_*\) is established by imposing conditions on the freely prescribed shear \(\hat{\chibar}\) of the ingoing null second fundamental form on \(\ubar=0\). The resulting gravitational radiation controls \(H-|\kappa|\) along the outgoing null boundary \(\partial J^+(M_1)\), driving
\(c=\min_{\partial J^+(M_1)}\bigl(H-|\kappa|\bigr)\)
toward a nearly uniform profile that leads to the formation of a MOTS by satisfying Yau's boundary condition at \(t=t_*\), on the boundary 
\(\partial J^+(M_{t=0})\cap\mathcal M_{t_*}\).
In this paper, for the mathematical analysis relevant to the semi-global existence theory, we adapt the approach similar to that of Ref.~\cite{MondalYau2026}, which provides the smooth evolution up to the formation of the first MOTS.

We briefly recall the double-null gauge used throughout; see Refs.~\cite{MondalYau2026,CBG}. Let \((\mathcal M,\mathrm g)\) be a spacetime admitting smooth optical functions
$(u,\ubar):\mathcal M\longrightarrow \mathbb R^2$
whose level sets
$H_u:={u=\mathrm{const}},
\qquad
\Hbar_{\ubar}:={\ubar=\mathrm{const}}$
are smooth outgoing and incoming null hypersurfaces, respectively. Whenever \(H_u\cap\Hbar_{\ubar}\neq\varnothing\), we set
$S_{u,\ubar}:=H_u\cap\Hbar_{\ubar}.$
We assume that each \(S_{u,\ubar}\) is a smooth, embedded, spacelike two-sphere, and denote its induced Riemannian metric by \(\gslash\).
In double-null coordinates \((u,\ubar,\theta^1,\theta^2)\), the spacetime metric takes the form
\begin{eqnarray}
\nonumber \mathrm g=
-2\Omega^2
\left(
\mathrm du\otimes\mathrm d\ubar
+
\mathrm d\ubar\otimes\mathrm du
\right)
\\+
\gslash_{AB}
\left(
\mathrm d\theta^A-b^A\mathrm du
\right)
\otimes
\left(
\mathrm d\theta^B-b^B\mathrm du
\right),
\end{eqnarray}
where \(\Omega\) is the null lapse, \(b=b^A\partial_{\theta^A}\) is the angular shift, and
$\gslash_{AB}=\gslash\left(
\frac{\partial}{\partial\theta^A},
\frac{\partial}{\partial\theta^B}
\right).$
The associated null vector fields are
$e_3=
\Omega^{-1}
\left(
\frac{\partial}{\partial u}
+
b^A\frac{\partial}{\partial\theta^A}
\right),
\qquad
e_4
=\Omega^{-1}\frac{\partial}{\partial\ubar},$
while \(e_A=\partial_{\theta^A}\), \(A=1,2\), span \(TS_{u,\ubar}\).
Let \(D\) denote the Levi-Civita connection of \(\mathrm g\). The Ricci coefficients and curvature components relative to this frame are defined by
$\Gamma_{\lambda\mu\nu}
\mathrm g\left(
e_\lambda,D_{e_\mu}e_\nu
\right),~
R_{\alpha\beta\mu\nu}
\mathrm{Riem}=\left(
e_\alpha,e_\beta,e_\mu,e_\nu
\right).$
The Einstein and Bianchi equations then decompose into the standard null structure and null Bianchi equations. We use the conventions and explicit formulas of Ref.~\cite{MondalYau2026}.

We next introduce the spacelike foliation induced by the double-null structure. Define
$t:=u+\ubar,$
and, for each \(t\in\mathbb R\), let
$\mathcal M_t:={p\in\mathcal M:u(p)+\ubar(p)=t}.$
We assume that, throughout the time interval under consideration, each \(\mathcal M_t\) is a smooth, asymptotically flat spacelike Cauchy hypersurface for the relevant portion of \((\mathcal M,\mathrm g)\). Thus, \({\mathcal M_t}\) defines a spacelike foliation compatible with the double-null gauge.
Now, let \(M_t\subset\mathcal M_t\) be a compact domain with smooth boundary
$S_t:=\partial M_t.$
We assume that \(M_{t_0}\) contains no MOTS and that no MOTS occurs in \(M_t\) for \(t\in[t_0,t_*)\), where \(t_*=u_*+\ubar_*\) denotes the first MOTS formation time. We further assume that the double-null leaves contained in \(M_t\) remain untrapped before \(t_*\):
$\tr\chi>0,
\qquad
\tr\chibar<0,
\qquad
t\in[t_0,t_*).$
Here \(e_T\) is the future-directed unit normal to \(\mathcal M_t\), \(e_S\) is the outward unit normal to the relevant two-surface within \(\mathcal M_t\), and
$L=e_T+e_S,
\qquad
\Lbar=e_T-e_S$
are the outgoing and ingoing null normals. If \(\Sigma\) denotes the induced metric and \({e_A}_{A=1,2}\) is a local \(\Sigma\)-orthonormal frame, then
$\tr\chi=
\Sigma^{AB}\mathrm g(D_{e_A}L,e_B),
\qquad
\tr\chibar
=\Sigma^{AB}\mathrm g(D_{e_A}\Lbar,e_B).$

We assume that the boundary \(S_t\) is transported along \(e_T\). Let \(T\) denote proper time along the integral curves of \(e_T\), so that \(\partial_T=e_T\). The local rate of change of the area element \(d\mathcal A\) is
\begin{equation}
\kappa
:=
\frac{1}{d\mathcal A}\frac{\partial d\mathcal A}{\partial T}=
\Sigma^{AB}\mathrm g(D_{e_A}e_T,e_B).
\label{collapsedef0}
\end{equation}
Accordingly,
$\kappa<0$
describes pointwise contraction of \(S_t\), whereas
$\kappa\geq0$
describes point-wise expansion, including the marginal case \(\kappa=0\), for which the local area element is stationary along the \(e_T\)-flow.
 On the boundary $\partial M_t$, Equation~\eqref{collapsedef0} may be expressed in terms of the null expansions as,
\begin{equation}
\kappa=
\frac{1}{2}\left(\tr\chi+\tr\chibar\right)
\begin{cases}
<0, & \text{contraction phase},\\
>0, & \text{expansion phase},\\
=0, & \text{area-preserving shear phase}
\end{cases}
\label{collapsedef}
\end{equation}
Moreover, since
$H=\frac{1}{2}\left(\tr\chi-\tr\chibar\right),$
the generalized mean curvatures during these three phases satisfy
\begin{equation}
c(t):=\min_{\partial M_t}\bigl(H-|\kappa|\bigr)=
\begin{cases}
=\min_{\partial M_t}\tr\chi, & \text{for $\kappa <0$},\\
=-\max_{\partial M_t}\tr\chibar, & \text{for $\kappa >0$},\\
=\min_{\partial M_t}H, & \text{for $\kappa =0$}.
\end{cases}
\label{collapsedef}
\end{equation}

\noindent Assuming the hypotheses of the Schoen--Yau criterion, the absence of a MOTS in \(M_{t_0}\) implies, by contraposition,
$c(t_0)
<
\frac{3\pi}{2\operatorname{Rad}(M_{t_0})}.$
Thus, for the criterion to be reached at a later time \(t_*\), the dimensionless quantity
$\mathcal Q(t):=\operatorname{Rad}(M_t)c(t)$
must increase from a value below \(3\pi/2\) to one satisfying
$\mathcal Q(t_*)\geq\frac{3\pi}{2}.$

We now provide the main theorem and the sketch of the proof here. 

\begin{figure} 
 \includegraphics[scale=0.4]{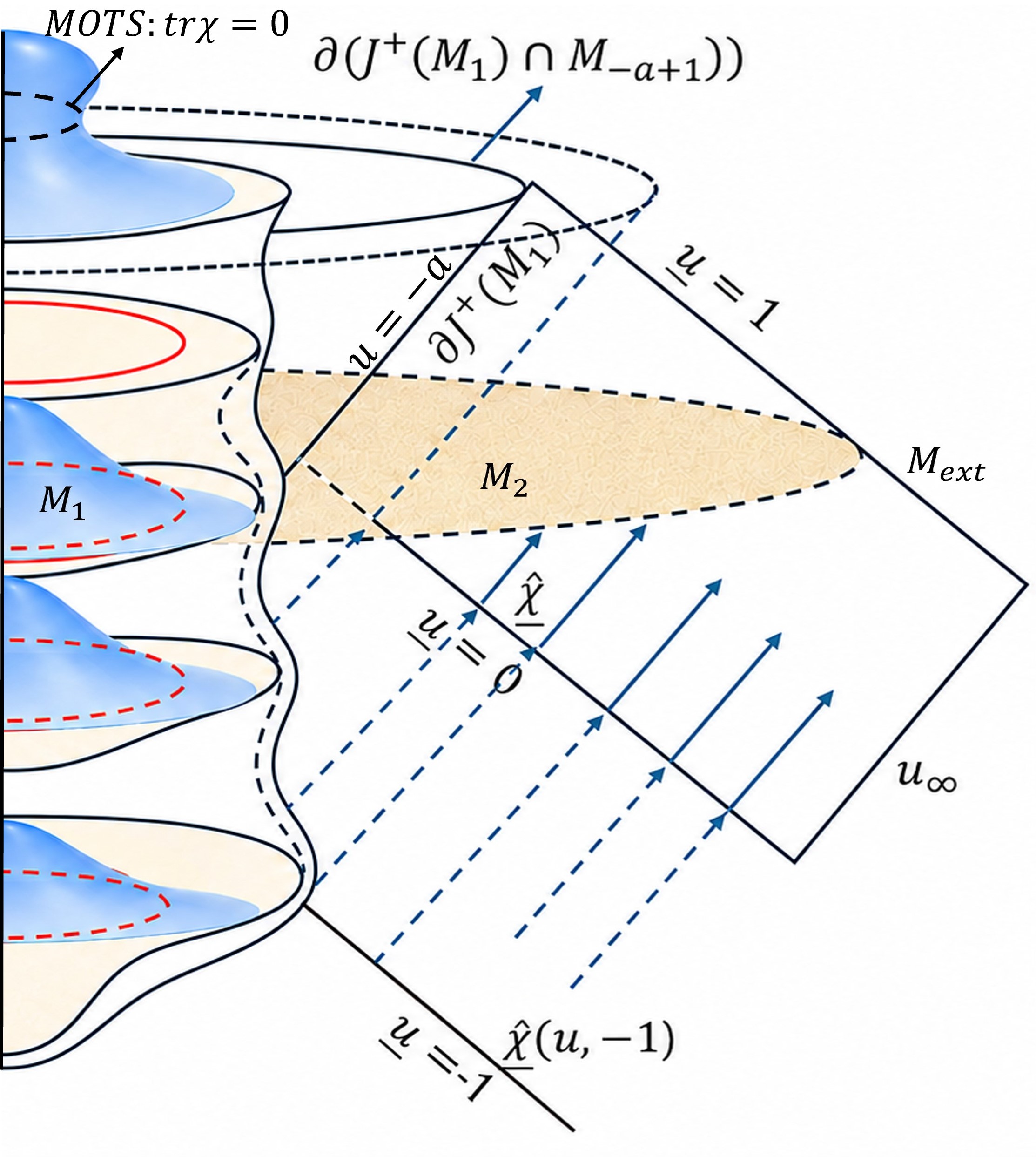}
 \caption{Figure schematically illustrates the boundary-geometry-driven MOTS formation in vacuum. It depicts the evolution of a compact domain whose boundary geometry reaches the Yau threshold, leading to MOTS formation.}
 \label{evol1}
\end{figure} 

\begin{theorem}[]
\label{thm:interior-mots-criterion}
Let $(\mathcal{U},\mathbf{g})$ be a smooth vacuum spacetime written in
spacetime-harmonic coordinates $(t,x^1,x^2,x^3)$:
$\mathbf{g}^{\alpha\beta}\Gamma^\mu_{\alpha\beta}=0.$
Write the spacetime metric in ADM form as
\begin{equation}
\mathbf{g}
=
-N^2dt^2
+
g_{ij}
\left(dx^i+X^i\,dt\right)
\left(dx^j+X^j\,dt\right),
\label{eq:spacetime-harmonic-adm}
\end{equation}
where $N$ is the lapse and $X=X^i\partial_i$ is the shift.
Assume that
$0<N_-\leq N\leq N_+$
on the spacetime domain under consideration.
Let $M_0\subset\mathcal M_0$ be a compact domain with smooth boundary
$S_0:=\partial M_0$, and assume that $M_0$ contains no MOTS.
Let $\Phi_t:M_0\longrightarrow\mathcal M_t$ be the flow generated by the shift-corrected time vector field,
$\frac{d}{dt}\Phi_t(x)
=
(\partial_t-X)_{\Phi_t(x)},
\qquad
\Phi_0(x)=x.$ Define
$M_t:=\Phi_t(M_0),
\qquad
S_t:=\Phi_t(S_0).$
Since
$\partial_t-X=Ne_T,$
the domains $M_t$ are transported in the future normal direction.
Let $e_S$ denote the outward unit normal to $S_t$ within
$\mathcal M_t$, and set
$L:=e_T+e_S,
\qquad
\Lbar:=e_T-e_S.$
Assume that, for every $t\in[0,t_1]$,
$\tr\chi>0,
\qquad
\tr\chibar<0
\qquad
\text{on }S_t.$
Define the boundary growth functionals
\begin{eqnarray}
\mathfrak G_-
&:=&
-\frac12\tr\chi
\left(\tr\chi+\tr\chibar\right)
-
|\hat\chi|_\Sigma^2
+
2(\underline\omega-\omega)\tr\chi
\nonumber\\
&&
+
2\operatorname{div}_\Sigma\eta
+
2|\eta|_\Sigma^2
+
2\rho
-
\hat\chi\cdot\hat\chibar,
\nonumber\\
\mathfrak G_+
&:=&
\frac12\tr\chibar
\left(\tr\chi+\tr\chibar\right)
+
|\hat\chibar|_\Sigma^2
+
2(\underline\omega-\omega)\tr\chibar
\nonumber\\
&&
-
2\operatorname{div}_\Sigma\underline\eta
-
2|\underline\eta|_\Sigma^2
-
2\rho
+
\hat\chi\cdot\hat\chibar,
\label{eq:boundary-growth-functional}
\end{eqnarray}
where $\Sigma$ is the metric induced on $S_t$.
Assume that one of the following two phases persists throughout
$t\in[0,t_1]$.

\noindent During the contracting phase, $\kappa<0$ on $S_t$, and therefore the generalized mean curvature is given by $c(t)=\min_{S_t}\tr\chi$. Let $\mathcal P_t:={p\in S_t:\tr\chi(t,p)=c(t)}$ denote the set of minimizing points of $\tr\chi$ on $S_t$, and assume that $\inf_{p\in\mathcal P_t}\mathfrak G_-(t,p)\geq2\gamma_0$ for every $t\in[0,t_1]$. During the expanding phase, $\kappa>0$ on $S_t$, and hence $c(t)=-\max_{S_t}\tr\chibar$. Let $\mathcal Q_t:={q\in S_t:\tr\chibar(t,q)=\max_{S_t}\tr\chibar(t,\cdot)}$ denote the set of maximizing points of $\tr\chibar$ on $S_t$.
and assume that
$\inf_{q\in\mathcal Q_t}
\mathfrak G_+(t,q)
\geq
2\gamma_0
\qquad
\text{for every }t\in[0,t_1].$
Let $\varepsilon_{\mathrm{def}}(t)
:=
\int_0^t
\|Nk\|_{L^\infty(M_\tau,g(\tau))}
\,d\tau.$
Assume that
$\operatorname{Rad}(M_0)c(0)
<
\frac{3\pi}{2},$
and that
\begin{eqnarray}
    e^{-\varepsilon_{\mathrm{def}}(t_1)}
\operatorname{Rad}(M_0)
\left(
c(0)+N_-\gamma_0t_1
\right)
\geq
\frac{3\pi}{2}.
\label{eq:quantitative-crossing}
\end{eqnarray}
Then $M_t$ contains a MOTS for some $t\in(0,t_1]$.
\end{theorem}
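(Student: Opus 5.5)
The argument runs by contradiction through Yau's criterion. Suppose $M_t$ contains no MOTS for every $t\in(0,t_1]$. By contraposition of Yau's theorem in vacuum, $\mathcal Q(t)=\operatorname{Rad}(M_t)c(t)<3\pi/2$ for all $t\in[0,t_1]$. It therefore suffices to prove two one-sided estimates for every $t\in[0,t_1]$:
\begin{equation*}
\operatorname{Rad}(M_t)\geq e^{-\varepsilon_{\mathrm{def}}(t)}\operatorname{Rad}(M_0),
\qquad
c(t)\geq c(0)+N_-\gamma_0 t .
\end{equation*}
At $t=t_1$, and using \eqref{eq:quantitative-crossing}, these give $\mathcal Q(t_1)\geq 3\pi/2$, which contradicts the assumption. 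If $c(0)\leq 0$ we first pass to the first time at which $c>0$; the linear lower bound guarantees that this time exists before $t_1$.

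\textbf{Step 1: the $H$-radius.} Along $\Phi_t$ the induced metric evolves by $\partial_t g_{ij}=-2Nk_{ij}$ (pulled back by the normal flow, so no Lie-derivative term appears). This gives the pointwise estimate
\begin{equation*}
|\partial_t g(v,v)|\leq 2\|Nk\|_{L^\infty}\,g(v,v).
\end{equation*}
Grönwall then yields $e^{-2\varepsilon_{\mathrm{def}}}g(0)\leq \Phi_t^*g(t)\leq e^{2\varepsilon_{\mathrm{def}}}g(0)$, so $\Phi_t$ is $e^{\varepsilon_{\mathrm{def}}}$-bi-Lipschitz. Lengths, and hence distances, are distorted by at most the factor $e^{\pm\varepsilon_{\mathrm{def}}}$. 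Take $\Gamma$ almost realizing $\operatorname{Rad}(M_0)$ and $r<\operatorname{Rad}(M_0,\Gamma)$. Then $\Phi_t(N_r(\Gamma))\supset N_{e^{-\varepsilon}r}(\Phi_t\Gamma)$, and $\operatorname{dist}(\Phi_t\Gamma,S_t)>e^{-\varepsilon}r$. Because $\Phi_t$ is a diffeomorphism, any disk bounded by $\Phi_t\Gamma$ in the smaller neighbourhood would pull back to a disk in $N_r(\Gamma)$, which is impossible. Hence $\operatorname{Rad}(M_t,\Phi_t\Gamma)\geq e^{-\varepsilon}r$, which proves the first estimate.

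\textbf{Step 2: growth of $c(t)$.} This is the main computation. In the contracting phase I would derive the evolution of $\tr\chi$ along $\partial_t-X=Ne_T=\tfrac N2(L+\Lbar)$. This requires the null structure equations for $L\tr\chi$ (Raychaudhuri, vacuum) and $\Lbar\tr\chi$, namely
\begin{equation*}
\Lbar\tr\chi=-\tfrac12\tr\chi\tr\chibar+2\underline\omega\tr\chi+2\operatorname{div}\eta+2|\eta|^2+2\rho-\hat\chi\cdot\hat\chibar .
\end{equation*}
Summing the two with $\omega$ conventions adapted to $e_T,e_S$ should produce exactly $\tfrac N2\cdot$ (a quantity whose lower bound is $\mathfrak G_-$) $\ge N\gamma_0$ at points of $\mathcal P_t$. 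The factor bookkeeping here (the $\tfrac12$ against the hypothesis $2\gamma_0$, and the renormalization of $L,\Lbar$ as $e_T\pm e_S$) is where I expect the main obstacle: the normalization must match the functional as written, and the transport of $S_t$ only in the normal direction means $e_S$ rotates. That rotation contributes tangential terms which must be absorbed into $\eta,\underline\eta$. Given this, $c(t)=\min\tr\chi$ is Lipschitz, and a Hamilton-type argument for the minimum of a family of functions applies. The lower Dini derivative of $c$ is at least $\inf_{\mathcal P_t}\partial_t\tr\chi\geq N_-\gamma_0$, and integrating gives the second estimate.

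The expanding phase is symmetric. Apply the same argument to $-\tr\chibar$ at the maximizing set $\mathcal Q_t$, using the equations for $L\tr\chibar$ and $\Lbar\tr\chibar$; their sum is controlled by $\mathfrak G_+$.

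Finally, persistence of the phase ensures that $c(t)$ is given by a single formula from \eqref{collapsedef} throughout, so the two estimates combine as described at the start and yield the contradiction.
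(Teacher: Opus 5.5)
Your proposal is correct and follows the paper's proof essentially step for step. The paper also gets $\operatorname{Rad}(M_t)\geq e^{-\varepsilon_{\mathrm{def}}(t)}\operatorname{Rad}(M_0)$ from the bi-Lipschitz comparison of $\Phi_t^*g(t)$ with $g(0)$, gets $c(t)\geq c(0)+N_-\gamma_0 t$ from the envelope (Dini-derivative) argument at the extremal set, and then applies Yau's theorem at $t=t_1$; your contradiction framing is just the contrapositive. The factor bookkeeping you worry about closes in one line: adding the Raychaudhuri and cross-focusing equations with $e_T=\frac12(L+\Lbar)$ gives exactly $2\nabla_{e_T}\tr\chi=\mathfrak G_-$ (and likewise $-2\nabla_{e_T}\tr\chibar=\mathfrak G_+$), which the paper takes directly from the null structure equations, so $\partial_t\tr\chi=\frac N2\mathfrak G_-\geq N_-\gamma_0$ on $\mathcal P_t$. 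Your detour for $c(0)\leq 0$ is unnecessary, since $\tr\chi>0$ and $\tr\chibar<0$ on $S_t$ already force $c>0$ throughout.
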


The proof of Theorem~\ref{thm:interior-mots-criterion} is given in Appendix~\ref{app:proof-thm1}. Now we explain the physics of the main theorem.
This may occur through an increase of \(c(t)\), sufficiently weak variation of the \(H\)-radius, or a combination of both effects.
Because \(\partial M_t\) is compact, the minimum of \(\tr\chi\) or the maximum of \(\tr\chibar\) are attained. As defined in Theorem \ref{thm:interior-mots-criterion},
$\mathcal P_t
:=
\left\{
p\in\partial M_t:
\tr\chi(p)=c(t)
\right\}$, $\mathcal Q_t
:=
\left\{
q\in\partial M_t:
-\tr\chibar(q)=c(t)
\right\}$, and we further define $\mathcal L_t
:=
\left\{
l\in\partial M_t:
H(l)=c(t)
\right\}$
which are the sets of minimizing points. After identifying the surfaces \(\partial M_t\) by the \(e_T\)-flow, the one-sided derivative of \(c\) is determined by the evolution of \(\tr\chi\), \(\tr\chibar\), and $H$ on \(\mathcal P_t\), \(\mathcal Q_t\), and $\mathcal L_t$ respectively. In particular, for contraction phase, i.e., $\kappa<0$, when the minimizer \(p_t\) is unique and varies smoothly,
$\frac{dc}{dT}(t)=
\nabla_{e_T}\tr\chi(p_t).$
Hence, $\nabla_{e_T}\tr\chi(p_t)>0$
implies that the minimum increases locally. More generally, if the minimum is attained at several points, it is sufficient to require
$\inf_{p\in\mathcal P_t}
\nabla_{e_T}\tr\chi(p)>0$, which as shown in Appendix \ref{app:proof-thm1} reduces to $\inf_{p\in\mathcal P_t}
\mathfrak{G}_->0$. For the expansion and area-preserving shear phases, the requirements are $\inf_{q\in\mathcal Q_t}
\mathfrak{G}_+>0$ and $\inf_{l\in\mathcal L_t}
(\mathfrak{G}_++\mathfrak{G}_-)>0)$, respectively.
However, mere positivity of the corresponding growth functional is not sufficient to guarantee that the Yau threshold is crossed within the available evolution interval. Theorem~\ref{thm:interior-mots-criterion} therefore imposes the quantitative lower bound $\inf_{\mathcal P_t}\mathfrak G_-\geq2\gamma_0$ in the contracting phase or $\inf_{\mathcal Q_t}\mathfrak G_+\geq2\gamma_0$ in the expanding phase. Together with the controlled deformation of the $H$-radius, this ensures that $\operatorname{Rad}(M_t)c(t)$ reaches $3\pi/2$ for some $t\in(0,t_1]$ (see Appendix~\ref{app:proof-thm1}). The following Theorem~\ref{thm2} verifies this quantitative growth condition in the Theorem~\ref{thm:interior-mots-criterion}. 
\begin{theorem}
\label{thm2}
Let $a\gg1$ and fix $0<\delta<\frac14$. Let $(\mathcal U_a,\mathbf g_a)$ be the vacuum development of an asymptotically flat smooth initial data verifying the estimates in Appendix \ref{app3} in spacetime-harmonic coordinates. Let $\Omega_a:=J^+\!\left(M_{\mathrm{int}}^{1/a}\right)\cap\{-a\leq t\leq-a+t_1\}$, where $0<t_1\leq\frac34$.
Introduce the shifted evolution parameter $\tau:=t+a\in[0,t_1]$. Let $\Phi_\tau$ denote the flow generated by $\partial_t-X=Ne_T$, starting from the initial slice $t=-a$, so that $\frac{d}{d\tau}\Phi_\tau=(\partial_t-X)\circ\Phi_\tau$ and $\Phi_0=\mathrm{Id}$. Let $M_0$ be a compact domain on the initial slice and define $M_\tau:=\Phi_\tau(M_0)$ and $S_\tau:=\partial M_\tau=\Phi_\tau(S_0)$.
Assume that $M_0$ contains no MOTS and that the propagated interior Cauchy estimates in $\Omega_a$ given in Appendix~\ref{app3} hold. Let $\psi:=\Ric^{(3)}-\frac13\Scal^{(3)}g$.
Assume that there exist constants $\mu_*>0$ and $\kappa_*>0$, independent of $a$, such that one of the following two initial alternatives holds on $S_0$.
In the contracting alternative, assume that $\kappa(0,p)\leq-\frac{\mu_*}{a^{3/2}}$ for every $p\in S_0$, and that
$$\inf_{p\in S_0}\psi(e_{S_0},e_{S_0})(0,p)\geq\frac{2\kappa_*}{a^2},$$
where $e_{S_0}$ denotes the outward unit normal to $S_0$ in the initial Cauchy hypersurface.
In the expanding alternative, assume that $\kappa(0,p)\geq\frac{\mu_*}{a^{3/2}}$ for every $p\in S_0$, and that
$$\sup_{p\in S_0}\psi(e_{S_0},e_{S_0})(0,p)\leq-\frac{2\kappa_*}{a^2}.$$
Define $\gamma_a:=\frac{\kappa_*}{2a^2}$. Assume moreover that $c(0)>0$, $\Rad(M_0)c(0)<\frac{3\pi}{2}$, and that
$\exp\!\left(-C_\delta t_1a^{-3/2+\delta}\right)\Rad(M_0)\left[c(0)+\left(1-C_\delta a^{-3/2+\delta}\right)\gamma_a t_1\right]$ $\geq\frac{3\pi}{2}$.
Then, for all sufficiently large $a$, the corresponding initial phase and anisotropy conditions persist throughout $\tau\in[0,t_1]$. More precisely, in the contracting phase,
$\kappa(\tau,p)<0$ on $S_\tau$ and
$\inf_{p\in S_\tau}\psi(e_S,e_S)(\tau,p)>\frac{\kappa_*}{a^2}$,
whereas in the expanding phase,
$\kappa(\tau,p)>0$ on $S_\tau$ and
$\sup_{p\in S_\tau}\psi(e_S,e_S)(\tau,p)<-\frac{\kappa_*}{a^2}$.
Consequently,
$\inf_{p\in\mathcal P_\tau}\mathfrak G_-(\tau,p)\geq2\gamma_a$
throughout the contracting phase, or
$\inf_{q\in\mathcal Q_\tau}\mathfrak G_+(\tau,q)\geq2\gamma_a$
throughout the expanding phase.
Hence the corresponding hypotheses of Theorem~\ref{thm:interior-mots-criterion} are satisfied.
Therefore, there exists $\tau_*\in(0,t_1]$ such that $\Rad(M_{\tau_*})c(\tau_*)\geq\frac{3\pi}{2}$, and $M_{\tau_*}$ contains a MOTS.
\end{theorem}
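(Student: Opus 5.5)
The plan is a continuity (bootstrap) argument on $[0,t_1]$ that propagates the initial sign of $\kappa$ and of the normal component $\psi(e_S,e_S)$. The resulting lower bound on $\mathfrak G_\mp$ is then fed into Theorem~\ref{thm:interior-mots-criterion}, with $\gamma_0=\gamma_a$ and $N_-\geq 1-C_\delta a^{-3/2+\delta}$.

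\textbf{Bootstrap set.} Let $\tau_b\le t_1$ be the supremum of times up to which the following hold on $S_\tau$ (contracting case; the expanding case is symmetric):
\begin{itemize}
\item $\kappa\le-\frac{\mu_*}{2a^{3/2}}$;
\item $\psi(e_S,e_S)\geq \frac{\kappa_*}{a^2}$.
\end{itemize}
By continuity $\tau_b>0$.

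\textbf{Propagating the bounds.} On $[0,\tau_b)$ I will use the propagated interior Cauchy estimates of Appendix~\ref{app3}. I expect them to give $\|Nk\|_{L^\infty}$, $|N-1|$, $|\partial_\tau\psi|$ and $|\partial_\tau\kappa|$ all bounded by $C_\delta a^{-3/2+\delta}$ times the relevant scale. For $\psi$ the scale is $a^{-2}$: the evolution equation $\partial_\tau \Ric^{(3)}\sim N(\nabla^2 k+k*\Ric+\dots)+\nabla^2N$, together with the decay of the Appendix~\ref{app3} norms, should give $|\partial_\tau\psi|\le C_\delta a^{-2-3/2+\delta}$. The frame rotation $\partial_\tau e_S$ is likewise $O(a^{-3/2+\delta})$. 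Integrating over $\tau\le t_1\le 3/4$ then yields
\[
\psi(e_S,e_S)(\tau)\ge \frac{2\kappa_*}{a^2}-C_\delta a^{-7/2+\delta}>\frac{\kappa_*}{a^2},
\]
which improves the second bootstrap bound. For $\kappa$, the Raychaudhuri-type evolution along $e_T$ gives $\partial_\tau\kappa=O(a^{-2})$, hence
\[
\kappa\le -\mu_* a^{-3/2}+Ca^{-2}<-\tfrac{\mu_*}{2}a^{-3/2}
\]
for large $a$. Both bootstrap bounds improve strictly, so $\tau_b=t_1$.

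\textbf{Converting to the growth bound (main obstacle).} The step I expect to be hardest is converting $\psi(e_S,e_S)>\kappa_*/a^2$ into $\mathfrak G_-\ge 2\gamma_a=\kappa_*/a^2$ at the minimizers $\mathcal P_\tau$.

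First I rewrite $\rho$ and the angular terms through the Gauss–Codazzi relations for $S_\tau\subset\mathcal M_\tau$. Using the vacuum constraints ($\Scal^{(3)}=|k|^2-(\tr k)^2$), $2\rho$ contains $-\Ric^{(3)}(e_S,e_S)+\frac12\Scal^{(3)}$ plus quadratic terms in $k$ and $s$. Together with the $-\frac12\tr\chi(\tr\chi+\tr\chibar)$ and $|\hat\chi|^2$ terms, this should combine into $\psi(e_S,e_S)$ minus the Gauss curvature contribution. The sign conventions have to be tracked carefully here. I would first check the reduction on a perturbed Schwarzschild-like sphere to fix signs.

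At a minimum point of $\tr\chi$ the angular gradient vanishes, so $\operatorname{div}\eta$ pairs with the Hessian of $\tr\chi$ only through its nonnegative part. The remaining terms are then estimated one by one:
\begin{itemize}
\item $|\eta|^2$, $\hat\chi\cdot\hat\chibar$ and $(\underline\omega-\omega)\tr\chi$ are bounded by $C_\delta a^{-2-3/2+\delta}$, using the Appendix~\ref{app3} bounds on torsion and shear;
\item $-\frac12\tr\chi(\tr\chi+\tr\chibar)=-\tr\chi\,\kappa>0$, because $\kappa<0$ and $\tr\chi>0$.
\end{itemize}
Consequently $\mathfrak G_-\ge \psi(e_S,e_S)-C a^{-7/2+\delta}\ge \kappa_*/a^2=2\gamma_a$.

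\textbf{Conclusion.} Theorem~\ref{thm:interior-mots-criterion} then applies with:
\begin{itemize}
\item $\varepsilon_{\mathrm{def}}(t_1)\le C_\delta t_1 a^{-3/2+\delta}$;
\item $N_-\ge 1-C_\delta a^{-3/2+\delta}$.
\end{itemize}
The hypothesized quantitative inequality is exactly \eqref{eq:quantitative-crossing}, which gives $\tau_*\in(0,t_1]$ with a MOTS in $M_{\tau_*}$. The expanding phase is handled identically, with $\mathfrak G_+$, the maximizers $\mathcal Q_\tau$, and the reversed signs.
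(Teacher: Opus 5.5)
\textbf{Verdict: there is a genuine gap, in the step you yourself flagged.} Your overall architecture is the paper's. You propagate the signs of $\kappa$ and $\psi(e_S,e_S)$ along $\Phi_\tau$ using the Appendix~\ref{app3} hierarchy. You then apply Theorem~\ref{thm:interior-mots-criterion} with $\gamma_0=\gamma_a$, $N_-\ge 1-C_\delta a^{-3/2+\delta}$ and $\varepsilon_{\mathrm{def}}\le C_\delta t_1a^{-3/2+\delta}$. Your persistence steps are fine up to $a^{\delta}$ losses; the bootstrap is harmless but unnecessary, since those estimates are assumed on all of $\Omega_a$.

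\textbf{Where the conversion step fails.} The bound $\mathfrak G_-\ge\psi(e_S,e_S)-Ca^{-7/2+\delta}$ does not follow from what you wrote. The driving term is only of size $a^{-2}$, and several other ingredients of $\mathfrak G_-$ have the same size, so they cannot be estimated away; they must cancel identically.
\begin{itemize}
\item On $S_\tau$ one has $\chi=k+s$ and $\chibar=k-s$, so $\hat\chi\cdot\hat\chibar=|\hat k|^2-|\hat s|^2$. The hierarchy only gives $|s|\lesssim a^{-1}$, so this term (and $|\hat\chi|^2$) can be of order $a^{-2}$. Your claimed bound $|\hat\chi\cdot\hat\chibar|\lesssim a^{-7/2+\delta}$ is therefore false in general.
\item The same holds for $H^2$ and for the Gauss curvature $K$ of $S_\tau$. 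Your sketch leaves the latter as an uncancelled ``Gauss curvature contribution.''
\item Your sign is backwards. In fact $2\rho=2\Ric^{(3)}(e_S,e_S)-\Scal^{(3)}-\frac12\kappa^2+|\hat k|^2$. The quantity $-\Ric^{(3)}(e_S,e_S)+\frac12\Scal^{(3)}$ is the sectional curvature of $TS_\tau$, and it enters $\rho$ with a minus sign through $K$. With your sign, the leading term of $\mathfrak G_-$ would be negative in the contracting alternative.
\item $\operatorname{div}_\Sigma\eta$ has no relation to the Hessian of $\tr\chi$ at a minimizer. It has no sign and must simply be bounded, $|\operatorname{div}_\Sigma\eta|\lesssim a^{-5/2+\delta}$.
\end{itemize}

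\textbf{The missing ingredient is the exact Cauchy representation of $\mathfrak G_\pm$.} Combine three facts:
\begin{itemize}
\item the null Gauss equation in the form $2\rho-\hat\chi\cdot\hat\chibar=-2K-\frac12\tr\chi\tr\chibar$;
\item the Gauss equation for $S_\tau\subset\mathcal M_\tau$, namely $2K=\Scal^{(3)}-2\Ric^{(3)}(e_S,e_S)+\frac12H^2-|\hat s|^2$;
\item the substitutions $\tr\chi=\kappa+H$, $\tr\chibar=\kappa-H$, $\hat\chi=\hat k+\hat s$.
\end{itemize}
The $H^2$ and $|\hat s|^2$ terms then cancel identically, leaving
\begin{align*}
\mathfrak G_-={}&2\psi(e_S,e_S)-\tfrac13\Scal^{(3)}-2\hat s\cdot\hat k-|\hat k|^2-H\kappa-\tfrac32\kappa^2\\
&+2(\underline\omega-\omega)\tr\chi+2\operatorname{div}_\Sigma\eta+2|\eta|_\Sigma^2 .
\end{align*}
The analogous identity holds for $\mathfrak G_+$, with leading term $-2\psi(e_S,e_S)$.

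Every term after the first is $O(a^{-5/2+\delta})$. It is not $O(a^{-7/2+\delta})$, because $\hat s\cdot\hat k$, $H\kappa$ and $\operatorname{div}_\Sigma\eta$ are each of size $a^{-5/2+\delta}$. This is still $o(a^{-2})$. Hence $\mathfrak G_-\ge2\psi(e_S,e_S)-C_\delta a^{-5/2+\delta}>\kappa_*/a^2=2\gamma_a$ on all of $S_\tau$, with no need to restrict to $\mathcal P_\tau$. The same holds symmetrically for $\mathfrak G_+$ in the expanding case.

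The same identity gives the paper's $\nabla_{e_T}\kappa=\frac14(\mathfrak G_--\mathfrak G_+)=\psi(e_S,e_S)+O(a^{-5/2+\delta})$. With this identity inserted, the rest of your argument goes through.
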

\noindent First note that the initial conditions on $\kappa$ and $\psi(e_{S_0},e_{S_0})$ are
compatible with the Cauchy data in
Appendix \ref{app3}, namely
$|\kappa|=O(a^{-3/2})$ and $|\psi|=O(a^{-2})$. Appendix~\ref{app:proof-thm2} shows that these
signs, together with the corresponding contracting or expanding phase,
persist throughout the semi-global evolution and yield the quantitative
growth condition required in
Theorem~\ref{thm:interior-mots-criterion}. They therefore provide a
sufficient, but not necessary, intrinsic mechanism for reaching Yau's
threshold.

The persistent area-preserving shear phase is different. In this case $\kappa=0$ throughout the evolution interval, so $c(t)=\min_{S_t}H$ and $\nabla_{e_T}\kappa=0$. Since $\mathfrak G_--\mathfrak G_+=4\nabla_{e_T}\kappa$, one has $\mathfrak G_-=\mathfrak G_+$, and combining this with the above leading-order reductions forces $\psi(e_S,e_S)=O(a^{-5/2})$. Hence the $O(a^{-2})$ contribution that drives the growth of $c(t)$ in the contracting and expanding phases is absent. Defining $\mathfrak G_{\mathrm{0}}:=\frac12(\mathfrak G_-+\mathfrak G_+)=2\nabla_{e_T}H$, one obtains $\mathfrak G_{\mathrm{0}}=O(a^{-5/2})$, and therefore $\frac{d}{dt}c(t)=O(a^{-5/2})$. Over the available $O(1)$ time interval this yields only $\Delta c=O(a^{-5/2})$; since $\operatorname{Rad}(M_t)=O(a)$, the corresponding change in $\operatorname{Rad}(M_t)c(t)$ is only $O(a^{-3/2})$, up to the perturbatively small $H$-radius deformation controlled by Appendix~\ref{app3}. Therefore, the present semi-global estimates do not provide sufficient quantitative growth to guarantee that Yau's threshold is crossed during a persistent area-preserving shear phase, although they do not exclude MOTS formation by some other mechanism.

The tensor $\psi:=\operatorname{Ric}^{(3)}-\frac13\operatorname{Scal}^{(3)}g$ defined in Theorem~\ref{thm2} measures the intrinsic three-dimensional curvature anisotropy of $M_t$: it is the trace-free part of the three-dimensional Ricci tensor and therefore measures the deviation of the principal Ricci curvatures from their isotropic average, as explained in Appendix~\ref{app4}.
The intrinsic anisotropy enters the deformation of the spatial geometry through the ADM evolution equations
\begin{eqnarray}
\partial_t g_{ij}
&=&
-2Nk_{ij}
+
\left(\mathcal{L}_{X}g\right)_{ij},
\label{adm-metric-evolution}\\
\left(
\partial_t-\mathcal{L}_{X}
\right)k_{ij}
&=&
-\nabla_i\nabla_jN
+
N
\left[
k_i{}^{m}k_{mj}
-
\left(\operatorname{tr}_{g}k\right)k_{ij}
\right].\nonumber\\
&& ~~~~~~~~+N~\operatorname{Ric}^{(3)}_{ij}
\label{adm-k-evolution}
\end{eqnarray}
Taking the trace-free part of Eq.~(\ref{adm-k-evolution}) gives

$\left[
\left(
\partial_t-\mathcal{L}_{X}
\right)k
\right]^{\mathrm{TF}}
=
-\left(\nabla^{2}N\right)^{\mathrm{TF}}
+
N\psi
+
N\left(k\circ k\right)^{\mathrm{TF}}
-
N\left(\operatorname{tr}_{g}k\right)k^{\mathrm{TF}},$
so that $\psi$ is the leading intrinsic-curvature source for the trace-free evolution of $k$. Thus, while the sign of $\psi(e_S,e_S)$ controls the leading growth of $c(t)$, the full tensor $\psi$ governs the anisotropic part of the spatial deformation.
Now, let $\Gamma\subset M_t$ be an admissible spacelike curve. Its length is
$\ell_t(\Gamma)=\int_{\Gamma}\left(g_{ij}(t)\dot{x}^{i}\dot{x}^{j}\right)^{1/2}\,d\lambda$. Consequently, uniform bilipschitz control of $g(t)$ relative to the initial metric gives corresponding control of $\operatorname{Rad}(M_t)$. In the Appendix~(\ref{app:proof-thm1}), we show that the bilipschitz stability of the $H$-radius gives
\begin{equation}
\operatorname{Rad}\!\left(\Phi_T(M_{\mathrm{int}}^{1/a})\right)
\geq
\left(
1-C_{\delta}a^{-3/2+\delta}
\right)
\operatorname{Rad}\!\left(M_{\mathrm{int}}^{1/a}\right).
\label{radius-lower-bound}
\end{equation}
Since $\operatorname{Rad}(M_{\mathrm{int}}^{1/a})=O(a)$, its possible absolute decrease over this interval is only $O(a^{-1/2+\delta})$.
Thus, Theorems~\ref{thm:interior-mots-criterion} and~\ref{thm2} identify a unified geometric mechanism: the signed normal component $\psi(e_S,e_S)$ produces the required growth of $c(t)$ in either the contracting or expanding phase, while the propagated Cauchy hierarchy in $\Omega_a$ from Appendix~\ref{app3} keeps the deformation of the $H$-radius perturbatively small. Consequently, for some $t\in(0,t_1]$, the product $\operatorname{Rad}(M_t)c(t)$ can cross Yau's threshold $\frac{3\pi}{2}$, at which point a MOTS is forced to occur inside $M_t$.

\noindent The trace-free Ricci tensor $\psi$ also connects the interior mechanism
described above with the characteristic construction of
Ref.~\cite{MondalYau2026}. In the present formulation, the hypotheses are
imposed on the evolving compact domain $M_t$, whereas
Ref.~\cite{MondalYau2026} additionally prescribes characteristic data on
the outgoing null boundary $\partial J^{+}(M_1)$. We now show that the
intrinsic anisotropy of $M_t$ contributes directly to the extreme null curvature components (i.e., $\alpha$, $\underline{\alpha}$) governing that characteristic evolution.
In the Cauchy development
$\Omega
=
J^{+}\!\left(M_{\mathrm{int}}^{1/a}\right)
\cap
\left\{-a\leq t\leq -a+1\right\},$
the propagated estimates discussed before give
$\alpha_{AB}
=
2\psi^{TT}_{AB}
+
O\left(a^{-5/2}\right),~
\underline{\alpha}_{AB}
=
2\psi^{TT}_{AB}
+
O\left(a^{-5/2}\right),$ where $\psi^{TT}_{AB}
:=
\left(
e_A{}^{i}e_B{}^{j}\psi_{ij}
\right)^{\mathrm{TF}_{\Sigma}}$, where $\mathrm{TF}_{\Sigma}$ denotes the trace-free projection with respect to induced metric
$\Sigma$ on $S_t$ and $TT$ denotes tangential-trace-free projection. Therefore, at the Cauchy--characteristic matching interface, the $\psi^{TT}$ supplies the leading anisotropic contribution to the
extreme null curvature components.
Now, in the characteristic development attached to
$M_{\mathrm{int}}^{1/a}\setminus M_1$, the extreme null curvature
components act as sources for the null shears. With the conventions of
Ref.~\cite{MondalYau2026}, the relevant null structure equations are
$\nabla_L \widehat{\chi}_{AB}
=
-\left(\operatorname{tr}\chi\right)\widehat{\chi}_{AB}
-2\omega\widehat{\chi}_{AB}
-\alpha_{AB},~
\nabla_{\underline{L}}\widehat{\underline{\chi}}_{AB}
=
-\left(\operatorname{tr}\underline{\chi}\right)
\widehat{\underline{\chi}}_{AB}
-2\underline{\omega}\widehat{\underline{\chi}}_{AB}
-\underline{\alpha}_{AB}.$  Thus, $\alpha$ and $\underline{\alpha}$ source the outgoing and incoming
shears, respectively. The incoming shear can then be controlled by integrating
the above null structure equation for $\widehat{\underline{\chi}}$ along the integral curves of
$\underline{L}$. More precisely, let
$\gamma:[s_0,s_1]\longrightarrow D_{a,-1}$ be an integral curve of
$\underline{L}$, with $\gamma(s_0)=q_0$ and $\gamma(s_1)=q_1$. A
standard transport estimate gives
$\left|
\widehat{\underline{\chi}}
\right|(q_1)
\leq
\exp\left[
C\int_{s_0}^{s_1}
\left(
\left|\tr\underline{\chi}\right|
+
\left|\underline{\omega}\right|
\right)(\gamma(s))\,ds
\right]
\nonumber\\
\quad\times
\left[
\left|
\widehat{\underline{\chi}}
\right|(q_0)
+
\int_{s_0}^{s_1}
\left|
\underline{\alpha}
\right|(\gamma(s))\,ds
\right].$
Thus, the shear induced at the matching interface remains quantitatively
controlled throughout the characteristic development. In particular,
the characteristic shear data are compatible with the smoothly matched
interior Cauchy data. They should not, however, be described as uniquely
generated by the interior data, since part of the characteristic data is
prescribed independently.

Thus, the same anisotropic tensor $\psi$ organizes both the interior and characteristic descriptions of the mechanism. In the Cauchy surface, Theorem~\ref{thm2} shows that an initial signed anisotropy of order $a^{-2}$ persists throughout the semi-global evolution: $\psi(e_S,e_S)>0$ drives the required growth of $c(\tau)$ in the contracting phase, whereas $\psi(e_S,e_S)<0$ does so in the expanding phase. At the same time, $\kappa(\tau)=\kappa(0)+O_\delta(a^{-2+\delta})$, so an initial contraction or expansion of order $a^{-3/2}$ preserves its sign on $\tau\in[0,t_1]$. Hence Yau's threshold can be reached through either contracting or expanding boundary evolution. An expanding outer domain may also contain a contracting compact subdomain to which the same criterion applies independently. The full tensor $\psi$ provides the leading intrinsic-curvature source for the trace-free evolution of $k$, while the propagated Cauchy hierarchy keeps the $H$-radius nearly preserved. At the matching interface, $\psi^{TT}$ gives the leading anisotropic contribution to $\alpha$ and $\underline{\alpha}$, which source the null shears. The interior and characteristic descriptions are therefore coupled manifestations of the same anisotropic vacuum dynamics, and the characteristic data of Ref.~\cite{MondalYau2026} may be viewed as its null-boundary realization.
This matching is illustrated in Fig.~\ref{evol1}. The compact domain is extended to the past through a collar reaching $\underline u=-1$, where the Cauchy data are smoothly matched to the characteristic development in $D_{a,-1}$ determined by the data on $\underline u=0$ and $u=u_\infty$. The same Cauchy estimates hold in the transported collar $M_\tau\setminus\widetilde M_\tau$, and the induced data agree smoothly across the matching interface. Yau's boundary criterion is therefore compatible with a broad class of dynamical compact domains and is not intrinsically tied to contraction.

Several questions remain open. Ref.~\cite{MondalYau2026} reaches the first MOTS but does not describe its subsequent evolution. In the present setting, $\operatorname{tr}\underline{\chi}<0$ at the first MOTS, and a generic first MOTS is strictly stable. If $\lambda_1>0$ and $\varphi>0$ are the principal eigenvalue and eigenfunction of the stability operator $\mathcal L_{\mathrm{MOTS}}$, then an inward deformation with velocity $-\varphi\nu$ gives
$\left.\frac{d}{d\varepsilon}\operatorname{tr}\chi\right|_{\varepsilon=0}
=-\lambda_1\varphi<0$.
Hence, for sufficiently small $\varepsilon>0$, the deformed surfaces satisfy $\operatorname{tr}\chi<0$ and $\operatorname{tr}\underline{\chi}<0$, and are therefore trapped. Under the standard hypotheses of Penrose's theorem, this implies future null geodesic incompleteness. The degenerate case $\lambda_1=0$ requires a separate rigidity analysis. Thus, the main remaining issue is the evolution of the resulting marginally trapped tube and trapped surface. Other directions include extending the mechanism to scale-critical regularity, possibly related to $\dot H^{3/2}$ control. Finally, the same
geometric mechanism may be studied for nonvacuum systems, where the
matter fields contribute to both the constraint equations and the null
focusing equations. Such an extension could provide a geometric framework for black-hole formation in physically realistic matter configurations and other strong-field regimes.

\appendix

\section{Proof of the Theorem~\ref{thm:interior-mots-criterion}}
\label{app:proof-thm1}

\begin{proof}With the null-frame conventions used here, the outgoing Raychaudhuri
equation and the cross-focusing equation are
\begin{align}
\nabla_L\tr\chi
&=
-\frac{1}{2}
\left(
\tr\chi
\right)^2
-
|\hat{\chi}|_{\Sigma}^{2}
-
2\omega\tr\chi,
\label{eq:raychaudhuri-theorem}\\
\nabla_{\Lbar}\tr\chi
&=
-\frac{1}{2}
\tr\chibar\,\tr\chi
+
2\underline{\omega}\tr\chi
+
2\operatorname{div}_{\Sigma}\eta
+
2|\eta|_{\Sigma}^{2}
\nonumber\\
&\quad
+
2\rho
-
\hat{\chi}\cdot\hat{\chibar}.
\label{eq:cross-focusing-theorem}
\end{align}
Since
$e_T
=
\frac{1}{2}
\left(
L+\Lbar
\right)$,
adding Eqs.~(\ref{eq:raychaudhuri-theorem}) and
(\ref{eq:cross-focusing-theorem}) gives
$2\nabla_{e_T}\tr\chi
=
\mathfrak{G}_-.$
Since $\Phi_t$ is generated by
$\partial_t-X=Ne_T$, and
$c(t)
=
\min_{p\in S_0}\tr\chi
\left(
t,\Phi_t(p)
\right),$
by the standard envelope formula for the minimum of a smooth
family of functions,
$D^+c(t)
=
\frac{1}{2}
\inf_{q\in\mathcal P_t}
N(t,q)\mathfrak{G}_-(t,q),$
where $\mathcal P_t$ is the set of minimizing points of $\tr\chi$ on $S_t$.
Since $0<N_-\leq N\leq N_+$ and
$\inf_{q\in\mathcal P_t}\mathfrak{G}_-(t,q)
\geq
2\gamma_0$, we obtain
$D^+c(t)\geq N_-\gamma_0.$
Integrating from $0$ to $t$ gives
\begin{equation}
c(t)
\geq
c(0)
+
N_-\gamma_0t.
\label{eq:c-linear-growth}
\end{equation}
This argument does not require the minimizing point of $\tr\chi$ to be
unique or to vary smoothly with time.

\noindent
Similarly, for the expansion phase, the useful identity is
$\mathfrak{G}_+
=
-2\nabla_{e_T}\tr\chibar$,
which can be obtained from the null structure equations as above.
Since for this scenario,
$c(t)
=
-\max_{S_t}\tr\chibar~,$
the corresponding envelope formula gives
$D^+c(t)
=
\frac{1}{2}
\inf_{q\in\mathcal Q_t}
N(t,q)\mathfrak{G}_+(t,q),$
where $\mathcal Q_t$ is the set of maximizing points of
$\tr\chibar$ on $S_t$.
Since $0<N_-\leq N\leq N_+$ and
$\inf_{q\in\mathcal Q_t}\mathfrak{G}_+(t,q)
\geq
2\gamma_0$, we again obtain
$D^+c(t)\geq N_-\gamma_0$,
and hence
$c(t)\geq
c(0)+N_-\gamma_0t$.

We next control the $H$-radius. Pulling the spatial metric back by the
flow $\Phi_t$ gives
\begin{align}
\frac{d}{dt}
\left(
\Phi_t^*g(t)
\right)
=
\Phi_t^*
\left[
\partial_tg-\mathcal{L}_Xg
\right]
=
-2\Phi_t^*(Nk).
\label{eq:pulled-back-metric-evolution}
\end{align}
Let $v\in T_pM_0$ be nonzero. From
Eq.~(\ref{eq:pulled-back-metric-evolution}),

$\left|
\frac{d}{dt}
\log
\left[
\Phi_t^*g(t)(v,v)
\right]
\right|
=
2
\frac{
\left|
\Phi_t^*(Nk)(v,v)
\right|
}{
\Phi_t^*g(t)(v,v)
}
\leq
2
\|Nk\|_{L^\infty(M_t,g(t))}.$
Integration yields
$e^{-2\varepsilon_{\mathrm{def}}(t)}g(0)
\leq
\Phi_t^*g(t)
\leq
e^{2\varepsilon_{\mathrm{def}}(t)}g(0).$
Consequently, the corresponding distance functions satisfy
\begin{equation}
e^{-\varepsilon_{\mathrm{def}}(t)}
d_{g(0)}
\leq
d_{\Phi_t^*g(t)}
\leq
e^{\varepsilon_{\mathrm{def}}(t)}
d_{g(0)}.
\label{eq:distance-comparison}
\end{equation}
Let $\Gamma\subset M_0$ be an admissible curve and let
$\Gamma_t:=\Phi_t(\Gamma)$. If
$0<r<\operatorname{Rad}(M_0,\Gamma)$,
then
$\operatorname{dist}_{g(0)}
\left(
\Gamma,\partial M_0
\right)
>
r$,
and $\Gamma$ does not bound a disk in
$N_r^{g(0)}(\Gamma)$. By
Eq.~(\ref{eq:distance-comparison}),
$\operatorname{dist}_{g(t)}
\left(
\Gamma_t,\partial M_t
\right)
>
e^{-\varepsilon_{\mathrm{def}}(t)}r.$
Moreover,
$\Phi_t^{-1}
\left(
N_{e^{-\varepsilon_{\mathrm{def}}(t)}r}^{g(t)}
(\Gamma_t)
\right)
\subset
N_r^{g(0)}(\Gamma).$
Hence, $\Gamma_t$ does not bound a disk in
$N_{e^{-\varepsilon_{\mathrm{def}}(t)}r}^{g(t)}
(\Gamma_t).$
Taking the supremum first over $r$ and then over admissible curves
$\Gamma$ gives
\begin{equation}
\operatorname{Rad}(M_t)
\geq
e^{-\varepsilon_{\mathrm{def}}(t)}
\operatorname{Rad}(M_0).
\label{eq:h-radius-lower-bound}
\end{equation}
Now combining Eqs.~(\ref{eq:c-linear-growth}) and
(\ref{eq:h-radius-lower-bound}), we obtain
$$
\operatorname{Rad}(M_t)c(t)
\geq
e^{-\varepsilon_{\mathrm{def}}(t)}
\operatorname{Rad}(M_0)
\left(
c(0)+N_-\gamma_0t
\right).
$$
At $t=t_1$, Eq.~(\ref{eq:quantitative-crossing}) implies
$\operatorname{Rad}(M_{t_1})c(t_1)
\geq
\frac{3\pi}{2}.$
Yau's vacuum boundary theorem therefore implies that $M_{t_1}$
contains a MOTS.
If a MOTS forms at an earlier time, the conclusion already holds.
Otherwise, $t_1$ is a formation time. Thus, in either case, $M_t$
contains a MOTS for some $t\in(0,t_1]$.
\end{proof}

\section{Proof of the Theorem~\ref{thm2}}
\label{app:proof-thm2}

\begin{proof}
We verify that, for sufficiently large $a$, the corresponding hypotheses of Theorem~\ref{thm:interior-mots-criterion} are satisfied. Using the null Gauss equation together with the Gauss equation for $S_\tau\subset\mathcal M_{-a+\tau}$, the boundary growth functionals admit the exact Cauchy representations
\begin{align}
\mathfrak G_+
={}&
-2\psi(e_S,e_S)
+
\frac13\operatorname{Scal}^{(3)}
-
2\hat s\cdot\hat k_{\Sigma}
+
|\hat k_{\Sigma}|^2
-
H\kappa\nonumber\\
&\quad+
\frac32\kappa^2
+
2(\underline\omega-\omega)\tr\chibar
-
2\operatorname{div}_{\Sigma}\underline\eta
-
2|\underline\eta|_{\Sigma}^2,
\label{eq:Gplus-exact-Cauchy-proof}
\\
\mathfrak G_-
={}&
2\psi(e_S,e_S)
-
\frac13\operatorname{Scal}^{(3)}
-
2\hat s\cdot\hat k_{\Sigma}
-
|\hat k_{\Sigma}|^2
-
H\kappa\nonumber\\
&\quad
-
\frac32\kappa^2
+
2(\underline\omega-\omega)\tr\chi
+
2\operatorname{div}_{\Sigma}\eta
+
2|\eta|_{\Sigma}^2.
\label{eq:Gminus-exact-Cauchy-proof}
\end{align}

\noindent The propagated Cauchy hierarchy and the corresponding boundary estimates give
$|\operatorname{Ric}^{(3)}|\lesssim a^{-2+\delta}$,
$|\operatorname{Scal}^{(3)}|\lesssim a^{-3+2\delta}$,
$|\psi|\lesssim a^{-2+\delta}$,
$|s|+|H|\lesssim a^{-1}$, and
$|\kappa|+|\hat k_{\Sigma}|\lesssim a^{-3/2+\delta}$.
Consequently,
$|\hat s\cdot\hat k_{\Sigma}|\lesssim a^{-5/2+\delta}$,
$|\hat k_{\Sigma}|^2+\kappa^2\lesssim a^{-3+2\delta}$, and
$|H\kappa|\lesssim a^{-5/2+\delta}$.
Moreover,
$|(\underline\omega-\omega)\tr\chi|
+
|(\underline\omega-\omega)\tr\chibar|
\lesssim a^{-7/2+\delta}$,
$|\operatorname{div}_{\Sigma}\eta|
+
|\operatorname{div}_{\Sigma}\underline\eta|
\lesssim a^{-5/2+\delta}$, and
$|\eta|_{\Sigma}^2+
|\underline\eta|_{\Sigma}^2
\lesssim a^{-3+2\delta}$.
Using these estimates, Eqs.~\eqref{eq:Gplus-exact-Cauchy-proof} and
\eqref{eq:Gminus-exact-Cauchy-proof} reduce to,
\begin{equation}
\mathfrak G_-
=
2\psi(e_S,e_S)+\mathcal R_-,
\qquad
\mathfrak G_+
=
-2\psi(e_S,e_S)+\mathcal R_+,
\label{eq:Gpm-leading-proof}
\end{equation}
where, uniformly for $\tau\in[0,t_1]$,
$|\mathcal R_-|+|\mathcal R_+|
\leq
C_\delta a^{-5/2+\delta}
=
o(a^{-2}).$
Now, the differentiated Cauchy hierarchy gives
$\left|
\frac{d}{d\tau}
\left[
\psi(e_S,e_S)\bigl(\tau,\Phi_\tau(p)\bigr)
\right]
\right|
\leq
C_\delta a^{-7/2+2\delta},$
and hence, integrating for $O(1)$ time interval we obtain
\begin{equation}
\psi(e_S,e_S)\bigl(\tau,\Phi_\tau(p)\bigr)
=
\psi(e_{S_0},e_{S_0})(0,p)
+
O_\delta(a^{-7/2+2\delta}).
\label{eq:psi-persistence-proof}
\end{equation}
Therefore, the contracting initial condition implies, for sufficiently large $a$,
$\inf_{p\in S_\tau}
\psi(e_S,e_S)(\tau,p)
>
\frac{\kappa_*}{a^2},$
whereas the expanding initial condition implies
$\sup_{p\in S_\tau}
\psi(e_S,e_S)(\tau,p)
<
-\frac{\kappa_*}{a^2}.$

We next verify persistence of the sign of $\kappa$. Using the above estimates on $\mathfrak G_\pm$, we obtain
$\nabla_{e_T}\kappa
=
\frac14(\mathfrak G_--\mathfrak G_+)
=
\psi(e_S,e_S)
+
O_\delta(a^{-5/2+\delta}).$
Along the flow $\Phi_\tau$, the propagated estimates therefore give
$$\kappa\bigl(\tau,\Phi_\tau(p)\bigr)
=
\kappa(0,p)
+
O_\delta(a^{-2+\delta}).$$

Therefore, for large $a$, the initial condition
$\kappa(0,p)\leq-\mu_*a^{-3/2}$ implies
$\kappa(\tau,p)
\leq
-\frac{\mu_*}{2a^{3/2}}
\qquad
\text{on }S_\tau,$
whereas
$\kappa(0,p)\geq\mu_*a^{-3/2}$ implies
$\kappa(\tau,p)
\geq
\frac{\mu_*}{2a^{3/2}}
>0
\qquad
\text{on }S_\tau.$
Thus, the initial contracting or expanding phase persists throughout $\tau\in[0,t_1]$.
In the contracting phase, Eqs.~\eqref{eq:Gpm-leading-proof} along with $\inf_{p\in S_\tau}
\psi(e_S,e_S)(\tau,p)
>
\frac{\kappa_*}{a^2},$,  yield
$\inf_{p\in\mathcal P_\tau}\mathfrak G_-(\tau,p)
>
\frac{\kappa_*}{a^2}
=
2\gamma_a$
for sufficiently large $a$, where
$\gamma_a:=\frac{\kappa_*}{2a^2}$.
Similarly, in the expanding alternative,
$\inf_{q\in\mathcal Q_\tau}\mathfrak G_+(\tau,q)
>
\frac{\kappa_*}{a^2}
=
2\gamma_a .$
Since the signed bounds on $\psi(e_S,e_S)$ hold on the whole of $S_\tau$, the same estimates give $\mathfrak G\pm>0$ on $S_\tau$ in the corresponding phases. Hence the corresponding positive-growth
hypothesis of
Theorem~\ref{thm:interior-mots-criterion} also holds during the
contracting phase.

The propagated lapse estimate gives
$1-C_\delta a^{-3/2+\delta}\leq N\leq1+C_\delta a^{-3/2+\delta}$, so that $N_-=1-C_\delta a^{-3/2+\delta}>0$.
Theorem~\ref{thm:interior-mots-criterion} therefore yields
$c(\tau)
\geq
c(0)
+
\left(
1-C_\delta a^{-3/2+\delta}
\right)
\gamma_a\tau .$
The deformation budget satisfies
$\varepsilon_{\mathrm{def}}(\tau)
\leq
C_\delta\tau a^{-3/2+\delta}$.
Hence
$\operatorname{Rad}(M_\tau)
\geq
\exp\!\left(
-C_\delta\tau a^{-3/2+\delta}
\right)
\operatorname{Rad}(M_0),$
and consequently
\begin{align}
\operatorname{Rad}(M_\tau)c(\tau)
&\geq
\exp\!\left(
-C_\delta\tau a^{-3/2+\delta}
\right)
\operatorname{Rad}(M_0)
\nonumber\\
&\quad\times
\left[
c(0)
+
\left(
1-C_\delta a^{-3/2+\delta}
\right)
\gamma_a\tau
\right].
\label{eq:product-theorem2-proof}
\end{align}

At $\tau=t_1$, the hypothesis $\exp\!\left(-C_\delta t_1a^{-3/2+\delta}\right)\Rad(M_0)\left[c(0)+\left(1-C_\delta a^{-3/2+\delta}\right)\gamma_a t_1\right]$ $\geq\frac{3\pi}{2}$ gives
$\operatorname{Rad}(M_{t_1})c(t_1)\geq\frac{3\pi}{2}$.
Yau's vacuum boundary theorem therefore implies that $M_{t_1}$ contains a MOTS. If a MOTS forms earlier, the conclusion already holds. Hence there exists $\tau_*\in(0,t_1]$ such that $M_{\tau_*}$ contains a MOTS.
\end{proof}
\section{Estimates on $\widetilde{M}_1$}
\label{app3}
On the interior region, in a fixed
harmonic coordinate chart on $\widetilde{M}_1$, the Cauchy data constructed in
Ref.~\cite{MondalYau2026} satisfy the following hierarchy:
\begin{eqnarray}
&&\|\partial^{m+1}g(-a)\|_{L^\infty(\widetilde{\mathcal M}_{1})}
\lesssim a^{-m-1},
\label{eq:g-hierarchy}\\
&&\|\partial^m k(-a)\|_{L^\infty(\widetilde{\mathcal M}_{1})}
\lesssim a^{-m-3/2},
\label{eq:k-hierarchy}\\
&&\|\partial^m(N-1)\|_{L^\infty(\widetilde{\mathcal M}_{1})}
+
\|\partial^mX\|_{L^\infty(\widetilde{\mathcal M}_{1})}
\lesssim a^{-m-3/2},\nonumber\\
\label{eq:NX-hierarchy}
\end{eqnarray}
for every fixed admissible integer \(m\geq0\). We also assume the
corresponding uniformly-local Sobolev estimates
$\|\partial^{m+1}g(-a)\|_{H^{s-1}_{\mathrm{ul}}
(\widetilde{\mathcal M}_{1})}
\lesssim a^{-m-1},~
\|\partial^m k(-a)\|_{H^{s-1}_{\mathrm{ul}}
(\widetilde{\mathcal M}_{1})}
\lesssim a^{-m-3/2},$
for some fixed \(s\gg1\), together with the analogous
\(H^s_{\mathrm{ul}}\)-control of \(N-1\) and \(X\).
Thus the estimates are scale-covariant: each additional spatial
derivative gives one additional factor \(a^{-1}\). These estimates are propagated in
$\Omega
:=
J^{+}\!\left(M_{\mathrm{int}}^{1/a}\right)
\cap
\left\{-a\leq t\leq -a+1\right\},$
where
$M_{\mathrm{int}}^{1/a}
:=
M_1
\cup
\left[
M_2
\setminus
\left(
\left[
u_{\infty},-a-\frac{1}{a}
\right]
\times[0,1]\times S
\cap M_2
\right)
\right].$
For every fixed $\delta\in(0,1/2)$, one has
\begin{eqnarray}
&&\|\partial^{m+1}g(-a)\|_{L^\infty(\Omega)}
\lesssim a^{-m-1+\delta},
\label{eq:g-hierarchy}\\
&&\|\partial^m k(-a)\|_{L^\infty(\Omega)}
\lesssim a^{-m-3/2+\delta},
\label{eq:k-hierarchy}\\
&&\|\partial^m(N-1)\|_{L^\infty(\Omega)}
+
\|\partial^mX\|_{L^\infty(\Omega)}
\lesssim a^{-m-3/2+\delta}.\nonumber\\
\label{eq:NX-hierarchy2}
\end{eqnarray}
while in the collar \(M_1\setminus\widetilde{M}_1\) the second fundamental form $k$, lapse $N$, and the shift vector field $X$ verify 
\begin{eqnarray}
 ||\partial^{k}k||_{L^\infty\cap H^{s-1}_{\mathrm{ul}}(M_1\setminus\widetilde{M}_1)}\lesssim a^{-(k+\frac{3}{2})},\nonumber\\~||\partial^{k}(N-1)||_{L^\infty\cap H^{s-1}_{\mathrm{ul}}(M_1\setminus\widetilde{M}_1)}\lesssim a^{-(k+\frac{3}{2})},\nonumber\\~||\partial^{k}X||_{L^\infty\cap H^{s-1}_{\mathrm{ul}}(M_1\setminus\widetilde{M}_1)}\lesssim a^{-(k+\frac{3}{2})}   
\end{eqnarray}
with smooth interpolation to the Cauchy data induced by \(D_{a,1}\) on the
overlap through $M_1\setminus\widetilde{M}_1$.  Assume moreover that the induced metric, second fundamental form, and
all tangential derivatives up to order \(N\) agree across \(S_{-a,0}\). Complete the exterior by gluing to a Kerr end satisfying
\[
m_{\mathrm{ADM}}\sim a^{1/2},
\qquad
J=O(a),
\]
while preserving the vacuum constraints and the above Sobolev bounds.  Denote the
resulting global vacuum data by \((\widetilde g,\widetilde k)\).

\section{Trace-free part of $\mathrm{Ric}^{(3)}$}
\label{app4}
\noindent At each point, choose an orthonormal basis
$\{e_1,e_2,e_3\}$ satisfying
$\mathrm{Ric}^{(3)}(e_i,e_j)
=
\lambda_i\delta_{ij}.$
Since
$\mathrm{Scal}^{(3)}
=
\lambda_1+\lambda_2+\lambda_3,$
the eigenvalues of $\psi$ are
$\lambda_i-\bar{\lambda},
~~
\bar{\lambda}
:=
\frac{1}{3}
\left(
\lambda_1+\lambda_2+\lambda_3
\right),$
and hence
$\lvert\psi\rvert_g^{2}
=
\sum_{i=1}^{3}
\left(
\lambda_i-\bar{\lambda}
\right)^{2}.$
Thus, $\lvert\psi\rvert_g$ measures the deviation of the Ricci
eigenvalues from their average.
In three dimensions, the sectional curvature of the plane spanned by
$e_i$ and $e_j$ is
\begin{eqnarray}
\mathcal{K}_{ij}
=
\frac{1}{2}
\left(
\lambda_i+\lambda_j-\lambda_k
\right)
=
\frac{1}{2}\mathrm{Scal}^{(3)}-\lambda_k,
~
\{i,j,k\}=\{1,2,3\}.\nonumber\\
\label{sectional-ricci-relation}
\end{eqnarray}
Therefore, the Ricci eigenvalues are equal if and only if the sectional
curvatures are equal. Equivalently, $\psi=0$ precisely when the
three-dimensional metric has isotropic sectional curvature at that point.
Thus, $\lvert\psi\rvert_g$ provides a pointwise measure of the intrinsic
curvature anisotropy of $M_t$.
\end{document}